\def\llncs{1}       
\def\setmargin{0}   
\def\papertype{1}   
\def\anonymous{0}   
\def\fullversion{1} 
\newtheorem{thm}{Theorem}[section]
\newtheorem{lem}[thm]{Lemma}
\newtheorem{cor}[thm]{Corollary}
\newtheorem{propo}[thm]{Proposition}
\newtheorem{clm}[thm]{Claim}
\newtheorem{defn}[thm]{Definition}
\newtheorem{assm}[thm]{Assumption}
\newtheorem{rem}[thm]{Remark}
\newtheorem{obs}[thm]{Observation}
\newtheorem{egs}[thm]{Example}
\newtheorem{fct}[thm]{Fact}
\newtheorem{cons}[thm]{Construction}
\newtheorem{nte}[thm]{Note}
\newenvironment{theorem}{\begin{thm}}{\end{thm}}
\newenvironment{lemma}{\begin{lem}}{\end{lem}}
\newcommand{\secref}[1]{Section~\ref{#1}}
\newcommand{\figref}[1]{Figure~\ref{#1}}
\renewcommand{\eqref}[1]{\mbox{Equation~(\ref{#1})}}
\def\figurewidth{0.97\columnwidth}
\newcommand{\widthfigure}[3]{\begin{figure}\begin{center}\begin{tabular}{|p{\figurewidth}|}\hline#1\hline\end{tabular}\end{center}\smallskip\caption{#2.}\label{#3}\end{figure}}
\def\appearsin#1{\gdef\@appearsin{#1}}
\def\maketitle{\par
 \begingroup
 \def\thefootnote{\arabic{footnote}}
 \def\@makefnmark{\hbox
 to 0pt{$^{\@thefnmark}$\hss}}
 \if@twocolumn
 \twocolumn[\@maketitle]
 \else \newpage
 \global\@topnum\z@ \@maketitle \fi\thispagestyle{plain}\@thanks
 \endgroup
 \setcounter{footnote}{0}
 \let\maketitle\relax
 \let\@maketitle\relax
 \gdef\@thanks{}\gdef\@author{}\gdef\@title{}\gdef\@appearsin{}
          \let\thanks\relax}
\def\@maketitle{\newpage
 \noindent \@appearsin
 \vskip 0.5in \begin{center}
 {\LARGE \@title \par} \vskip 1.5em {\large \lineskip .5em
\begin{tabular}[t]{c}\@author
 \end{tabular}\par}
 \vskip 1em {\normalsize \@date} \end{center}
 \par
 \vskip 1.5em}
\def\abstract{\if@twocolumn
\section*{Abstract}
\else \small
\begin{center}
{\bf Abstract\vspace{-.5em}\vspace{0pt}}
\end{center}
\quotation
\fi}
\def\endabstract{\if@twocolumn\else\endquotation\fi}
\DeclareMathAlphabet{\mathsl}{OT1}{cmr}{m}{sl}
\DeclareMathAlphabet{\mathsc}{OT1}{cmr}{m}{sc}
\DeclareMathAlphabet{\mathslbf}{OT1}{cmr}{bx}{sl}
\DeclareFontFamily{OT1}{pzc}{}
\DeclareFontShape{OT1}{pzc}{m}{it}%
             {<-> s * [1.150] pzcmi7t}{}
\DeclareMathAlphabet{\mathscript}{OT1}{pzc}{m}{it}
\newlength{\saveparindent}
\newlength{\saveparskip}
\def\qed{{\hspace{1pt}\rule[-1pt]{3pt}{9pt}}
\end{rmfamily}\addtolength{\parskip}{-0pt}
\setlength{\parindent}{\saveparindent}
\global\advance\proofqeded by 1 }
\def\qedenv{
\end{rmfamily}\addtolength{\parskip}{-0pt}
\setlength{\parindent}{\saveparindent}
\global\advance\proofqeded by 1 }
\newenvironment{proof}%
 {\proofstart}%
 {\ifnum\proofqeded=\proofended~\qed\fi \global\advance\proofended by 1
  \medskip}
 {\proofenvstart}%
 {\ifnum\proofqeded=\proofended\qedenv\fi \global\advance\proofended by 1
  \medskip}
\def\proofstart{\@ifnextchar[{\@oprf}{\@nprf}}
\def\proofenvstart{\@ifnextchar[{\@osprf}{\@nsprf}}
\def\@oprf[#1]{\begin{rmfamily}\protect\vspace{6pt}\noindent{\bfseries Proof of #1:\ }%
\addtolength{\parskip}{5pt}\setlength{\parindent}{0pt}}
\def\@osprf[#1]{\begin{rmfamily}\protect\vspace{6pt}\noindent
\addtolength{\parskip}{5pt}\setlength{\parindent}{0pt}}
\def\@nprf{\begin{rmfamily}\protect\vspace{6pt}\noindent{\bfseries Proof:\ }%
\addtolength{\parskip}{5pt}\setlength{\parindent}{0pt}}
\def\@nsprf{\begin{rmfamily}\protect\vspace{6pt}\noindent%
\addtolength{\parskip}{5pt}\setlength{\parindent}{0pt}}
\newcommand{\calE}{{\cal E}}
\newcommand{\calM}{{\cal M}}
\newcommand{\calS}{{\cal S}}
\newcommand{\calX}{{\cal X}}
\newcommand{\calY}{{\cal Y}}
\newcommand{\alice}{\ensuremath{\mathsf{Alice}}\xspace}
\newcommand{\bob}{\ensuremath{\mathsf{Bob}}\xspace}
\newcommand{\f}{\ensuremath{\mathcal{F}}\xspace}
\newcommand{\s}{\ensuremath{\mathcal{S}}\xspace}
\newcommand{\env}{\ensuremath{\mathcal{Z}}\xspace}
\newcommand{\adv}{\ensuremath{\mathcal{A}}\xspace}
\newcommand{\party}{\ensuremath{\mathcal{P}}\xspace}
\newcommand{\oalice}{\ensuremath{\mathsf{output}_{\mathsf{Alice}}}\xspace}
\newcommand{\obob}{\ensuremath{\mathsf{output}_{\mathsf{Bob}}}\xspace}
\newcommand{\trans}{\ensuremath{\mathsf{trans}}\xspace}
\newcommand{\viewbob}{\ensuremath{\mathsf{view}_{\bob}}\xspace}
\newcommand{\viewalice}{\ensuremath{\mathsf{view}_{\alice}}\xspace}
\newcommand{\sid}{\ensuremath{\mathsf{sid}}\xspace}
\newcommand{\sinf}{\ensuremath{\mathrm{I_S}}\xspace}
\newcommand{\fptm}{\ensuremath{\f_{P_{V,W|X,Y}}}\xspace}
\newcommand{\frot}{\ensuremath{\f_{\mathsf{ROT}}}\xspace}
\newcommand{\fauth}{\ensuremath{\f_{\mathsf{AUTH}}}\xspace}
\DeclareMathOperator{\coinsa}{\mathsf{coins_{\alice}}}
\DeclareMathOperator{\coinsb}{\mathsf{coins_{\bob}}}
\begin{document}

\title{On the Composability of Statistically Secure Random Oblivious Transfer}

\author{Rafael~Dowsley, J\"{o}rn M\"{u}ller-Quade, Anderson~C.~A.~Nascimento
\thanks{Rafael~Dowsley is with the Department of Computer Science, Aarhus University. Email: rafael@cs.au.dk. Rafael Dowsley has received funding from the European Research Council (ERC) under the European  Union's  Horizon  2020  research  and  innovation programme  under  grant  agreement  No  669255  (MPCPRO).}
\thanks{Anderson~C.~A.~Nascimento is with the Institute of Technology, University of Washington Tacoma. E-mail: andclay@uw.edu.}
\thanks{J\"{o}rn M\"{u}ller-Quade is with the Institute of Theoretical Informatics, Karlsruhe Institute of Technology.  E-mail: mueller-quade@kit.edu.}
}

\maketitle

\begin{abstract}
We show that stand-alone statistically secure random oblivious transfer protocols based on two-party stateless primitives are statistically universally composable. I.e. they are simulatable secure with an unlimited adversary, an unlimited simulator and an unlimited environment machine. Our result implies that several previous oblivious transfer protocols in the literature which were proven secure under weaker, non-composable definitions of security can actually be used in arbitrary statistically secure applications without lowering the security.
\end{abstract}

\begin{IEEEkeywords}
Random Oblivious Transfer, Unconditional Security, Universal Composability.
\end{IEEEkeywords}

\IEEEpeerreviewmaketitle

\section{Introduction}

Oblivious transfer (OT)~\cite{TR:Rabin81} is a primitive of central importance in modern cryptography and implies secure computation~\cite{STOC:GolMicWig87,STOC:Kilian88}. Several flavors of OT were proposed, but they are all equivalent~\cite{C:Crepeau87}. In this work we focus on the so-called one-out-of-two random oblivious transfer. This is a two-party primitive in which a sender (Alice) gets two uniformly random bits $b_0$, $b_1$ 
and a receiver (Bob) gets a uniformly random choice bit $c$ and $b_c$. Bob remains ignorant about $b_{\overline c}$. On the other hand, Alice cannot learn the choice bit $c$. 

A very large number of OT protocols are known in the stand-alone setting, based on various assumptions (both computational and physical), but this notion does not guarantee security when multiple copies of the protocol are executed, or when the OT protocols are used as building blocks within other protocols. This is an unsatisfactory state of affairs, as the major utility of OT is in the modular designing of larger protocols. Following the simulation paradigm used in~\cite{GolMicRac89} to define the seminal notion of zero-knowledge proofs of knowledge, many simulation-based definitions of security for multi-party protocols were proposed (e.g.~\cite{STOC:GolMicWig87,C:Beaver91a}) and they guarantee that the protocols are sequentially composable~\cite{JC:Canetti00}, however this paradigm of security does not guarantee general composability of the protocols. UC-security~\cite{FOCS:Canetti01} emerges as a very desirable notion of security for OT since it guarantees that the security of the protocol holds even when the OT scheme is concurrently composed with an arbitrary set of protocols.
UC-security is a very powerful notion of security that allows to fully enjoy the nice properties of OT within other protocols.

Some questions about the equivalence of stand-alone and composable security notions in the case of statistically secure protocols were studied \cite{STOC:KusLinRab06,TCC:BackMulUnr07}. In general, these security notions are not equivalent \cite{TCC:BackMulUnr07}. Therefore, it is an interesting question to study if there are restricted scenarios where this equivalence holds. 

\emph{Our Results:} In this paper we show that random OT protocols that are based on certain stateless two-party functionalities and that match a certain list of information-theoretical security properties are not only secure in a simulation-based way, but are actually UC-secure. Note that Random OT can be straightforwardly used to obtain OT for arbitrary inputs in a composable way \cite{STOC:Beaver97}. Note also that most OT protocols based on two-party stateless functionalities already internally run a random OT protocol and then use derandomization techniques to obtain OT for arbitrary inputs.
We think that this approach is interesting because, in this scenario, a protocol designer can worry only about meeting the list-based security notion and the protocol inherits the UC-security. The setting studied in this paper covers the case of statistically secure protocols based on noisy channels, cryptogates and pre-distributed correlated data. As a consequence of our result, several previously proposed protocol implementing oblivious transfer that were proven secure in weaker models automatically have their security upgraded to a simulation-based, composable one for free \cite{STOC:Beaver97,FOCS:CreKil88,EC:Crepeau97,EC:DamKilSal99,ISIT:SteWol02,SCN:CreMorWol04,IEEEIT:NasWin08,IEEEIT:PDMN11,AhlCsi13,IEEEIT:DowNas17,STOC:Kilian00,C:BeiMalMic99,Rivest99,TCC:DotKraMul11,ICITS:DowMulNil15}.

\subsection{Related Work}

OT can be constructed based both on generic computational assumptions such as the existence of enhanced trapdoor permutations~\cite{EveGolLem85,Goldreich04} and on the computational hardness of many specific problems such  as factoring~\cite{TR:Rabin81}, Diffie-Hellman~\cite{C:BelMic89,SODA:NaoPin01}, LWE \cite{C:PeiVaiWat08}, variants of LPN \cite{CANS:DavDowNas14} and McEliece assumptions~\cite{ICITS:DGMN08,IEICE:DGMN12}. However, the focus of this work is on statistically secure OT. When aiming for statistical security, OT can be based on noisy channels \cite{FOCS:CreKil88,EC:Crepeau97,EC:DamKilSal99,ISIT:SteWol02,SCN:CreMorWol04,IEEEIT:NasWin08,IEEEIT:PDMN11,AhlCsi13,IEEEIT:DowNas17}, cryptogates \cite{STOC:Kilian00,C:BeiMalMic99}, pre-distributed correlated data \cite{STOC:Beaver97,Rivest99,IEEEIT:NasWin08}, the bounded storage model~\cite{FOCS:CacCreMar98,ISIT:DowLacNas14,TCC:DHRS04,IEEEIT:DowLacNas18} and on hardware tokens~\cite{TCC:DotKraMul11,ICITS:DowMulNil15}.

Canetti and Fischlin~\cite{C:CanFis01} showed that OT cannot be UC-realized in the plain model, so additional setup assumptions are required. UC-secure OT protocols were initially constructed in the common reference string (CRS) model~\cite{STOC:CLOS02,TCC:Garay04,C:PeiVaiWat08}. In the CRS model there exists an honestly generated random string that is available to the parties (the simulator can generate its own string as long as it looks indistinguishable from the honestly generated one). In the public key infrastructure model, Damg\aa{}rd and Nielsen~\cite{C:DamNie03} proposed an OT protocol that is UC-secure against adaptive adversaries under the assumption that threshold homomorphic encryption exists. 
Katz~\cite{EC:Katz07} proved that two-party and multi-party computation are possible assuming a tamper-proof hardware. 

The question about the equivalence of stand-alone and composable security definitions for statistically secure protocols has been previously addressed in \cite{STOC:KusLinRab06,TCC:BackMulUnr07}, where it was proven that the equivalence does not hold in general.  In \cite{EC:CSSW06} it was proven that perfectly secure OT protocols according to a list of properties are \emph{sequentially} composable, this result being extended to statistical security in \cite{ICITS:CreWul08}. 

It was shown that for statistically secure commitment schemes based on two-party stateless primitives stand-alone security implies UC-security \cite{JIT:DGMN13}. While this result implies the possibility of building UC-secure OT protocols based on these commitment protocols, this is not the most efficient way of obtaining OT and it does not prove any additional security property about the existing OT protocols.

Even if the resources available to the parties to implement OT are asymmetric, Wolf and Wullschleger \cite{EC:WolWul06} showed a very simple way to reverse the OT's direction (indeed all complete two-party functionalities are reversible as proved recently by Khurana et al. \cite{EC:KKMPS16}).

\section{Preliminaries}

\subsection{Notation}

Domains of random variables will be denoted by calligraphic letters, the random variables by upper case letters and 
the realizations by lower case letters. For a random variables $X$ over $\mathcal{X}$ and $Y$ over $\mathcal{Y}$, 
$P_X: \mathcal{X} \rightarrow [0,1]$ with $\sum_{x \in \mathcal{X}} P_X(x) =1$ denotes the probability distribution of $X$, $P_X(x) := \sum_{y \in \mathcal{Y}}P_{XY}(x,y)$ the marginal probability distribution and $P_{X|Y}(x|y):=P_{XY}(x,y)/P_Y(y)$ the conditional probability distribution if $P_Y(y) \neq 0$. The statistical distance $\delta (P_X,P_Y)$ between $P_X$ and $P_Y$ with alphabet $\mathcal{X}$ is given by 
$$\delta (P_X,P_Y) = \max_{\calS \subseteq \mathcal{X}} \left|\sum_{x \in \calS} P_X(x) - P_Y(x)\right|.$$
We say $P_X$ and $P_Y$ are $\varepsilon$-close if $\delta (P_X,P_Y) \leq \varepsilon$. Following Cr\'{e}peau and Wullschleger \cite{ICITS:CreWul08}, let the statistical information of $X$ and $Y$ given $Z$ be defined as $$\sinf(X;Y|Z)= \delta (P_{XYZ}, P_Z P_{X|Z} P_{Y|Z}).$$

\subsection{The UC Framework}

Here we briefly review the main concepts of the UC framework, for more details please refer to the original work of Canetti~\cite{FOCS:Canetti01}. 
In the UC framework, the security of a protocol to carry out a certain task is ensured in three phases:

\begin{enumerate}
\item One formalizes the framework, i.e., the process of executing a protocol in the presence of an adversary and an environment machine.
\item One formalizes an ideal protocol for carrying out the task in an ideal protocol using a ``trusted party''. In the ideal protocol the trusted party captures the requirements of the desired task and the parties do not communicate among themselves.
\item One proves that the real protocol emulates the ideal protocol, i.e., for every adversary in the real model there exists an ideal adversary (also known as the simulator) in the ideal model such that no environment machine can distinguish if it is interacting with the real or the ideal world.
\end{enumerate}

The environment in the UC framework represents all activity external to the running protocol, so it provides inputs to the parties running the protocol and receives the outputs that the parties generate during the execution of the protocol. As stated above the environment also tries to distinguish between attacks on real executions of the protocol and simulated attacks against the ideal functionality. If no environment can distinguish the two situations, the real protocol emulates the ideal functionality. Proving that a protocol is secure in the UC framework provides the following benefits:

\begin{enumerate}
\item The ideal functionality describes intuitively the desired properties of the protocol.
\item The protocols are secure under composition.
\item The security is retained when the protocol is used as a sub-protocol to replace an ideal functionality that it emulates.
\end{enumerate}

\paragraph{The ideal world}
An ideal functionality $\f$ represents the desired properties of a given task. Conceptually, $\f$ is treated as a local subroutine by the several parties that use it, and so the communication between the parties and $\f$ is supposedly secure (i.e., messages are sent by input and output tapes). 
The ideal protocol also involves a simulator $\s$, an environment $\env$ on input $z$ and a set of dummy parties that interacts as defined below. Whenever a dummy party is activated with input $x$, it writes $x$ onto the input tape of $\f$. Whenever the dummy party is activated with value $x$ on its subroutine output tape, it writes $x$ on subroutine output tape of $\env$. The simulator $\s$ has no access to the contents of messages sent between dummy parties and $\f$, and it should send corruption messages directly to $\f$, who is responsible for determining the effects of corrupting any dummy party. The ideal functionality receives messages from the dummy parties by reading its input tape and sends messages to them by writing to their subroutine output tape. In the ideal protocol there is no communication among the parties. The environment $\env$ can set the inputs to the parties and read their outputs, but cannot see the communication with the ideal functionality.

\paragraph{The real world}
In the real world, the protocol $\pi$ is executed by parties $\party_1,\ldots, \party_n$ with some adversary $\adv$ and an environment machine $\env$ with input $z$. $\env$ can set the inputs for the parties and see their outputs, but not the communication among the parties. The parties can invoke subroutines, pass inputs to them and receive outputs from them. They can also write messages on the incoming communication tape of the adversary. These messages may specify the identity of the final destination of the message. $\adv$ can send messages to any party ($\adv$ delivers the message). In addition, they may use the ideal functionalities that are provided to the real protocol.
$\adv$ can communicate with $\env$ and the ideal functionalities that are provided to the real protocol. $\adv$ also controls the corrupt parties (the environment always knows which parties are corrupted).

\widthfigure{
\begin{center}
\textbf{Functionality} $\frot{}$
\smallskip
\end{center}\\
$\frot{}$ interacts with $\alice$ and $\bob$.\\
\\
\textbf{Alice's Check-in:} Upon receiving (\textsc{Distribute}, $\sid$, \ldots) from $\alice$, if $\alice$ is honest sample uniformly random $b_0,b_1 \in \{0,1\}$; otherwise set the bits $b_0,b_1$ as specified in $\alice$'s message. 
Record $(\sid, b_0,b_1)$ and ignore future (\textsc{Distribute}, $\sid$, \ldots) from $\alice$.\\
\\

\textbf{Bob's Check-in:} Upon receiving (\textsc{Distribute}, $\sid$, \ldots) from $\bob$, if $\bob$ is honest sample a uniformly random $c \in \{0,1\}$; otherwise set the bit $c$ as specified in $\bob$'s message. Record $(\sid, c)$
and ignore future (\textsc{Distribute}, $\sid$, \ldots) from $\bob$. \\
\\

\textbf{Distribution:}  Upon having recorded values $b_0$, $b_1$ and $c$ for some $\sid$, send (\textsc{Output}, $\sid$) to $\s$. Upon an answer (\textsc{Output}, $\sid$) from $\s$, deliver (\textsc{Output}, $b_0$, $b_1$) to $\alice$ and (\textsc{Output}, $c$, $b_c$) to $\bob$.\\
\\
}{The one-out-of-two bit random oblivious transfer functionality}{fig-fot}

\paragraph{The adversarial model}
The network is asynchronous without guaranteed delivery of messages. The communication is public, but authenticated (i.e., the adversary cannot modify the messages). The adversary is active in its control over corrupted parties. Any number of parties can be corrupted. Finally, the adversary, the environment and the simulator are allowed unbounded complexity. This assumption on the computational power of the simulator somehow weakens our result as the composition theorem cannot be applied several times if the real adversary were restricted to polynomial time, because the ``is at least as secure as'' relation cannot be proven to be transitive anymore. However, arbitrary composition is allowed when considering statistically secure protocols and this situation is  common in the literature when proving general results on the composability of statistically secure protocols \cite{TCC:BackMulUnr07,STOC:KusLinRab06,ICITS:CreWul08,EC:CSSW06}.

\paragraph{Realizing an ideal functionality}

A protocol $\pi$ statistically UC-realizes an ideal functionality $\f$ if for any real-life adversary $\adv$ there exists a simulator $\s$ such that no environment $\env$, on any input $z$, can tell with non-negligible probability whether it is interacting with $\adv$ and parties running $\pi$ in the real-life process, or it is interacting with $\s$ and $\f$ in the ideal protocol. This means that, from the point of view of the environment, running protocol $\pi$ is statistically indistinguishable from the ideal world with $\f$.

\paragraph{The Oblivious Transfer Functionality}

We present in \figref{fig-fot} the one-out-of-two bit random oblivious transfer functionality $\frot$. The sender will be denote by $\alice$ and the receiver by $\bob$. 

\subsection{Setup Assumption}\label{sec:fbsc}

In this work we consider the scenario in which $\alice$ and $\bob$ have access to the functionality $\fptm$ that given inputs $x \in \mathcal{X}$ from Alice and $y \in \mathcal{Y}$ from Bob samples the outputs $v \in \mathcal{V}$ and $w \in \mathcal{W}$ according to the conditional probability distribution $P_{V,W|X,Y}$, and gives the outputs $v$ and $w$ to $\alice$ and $\bob$, respectively. The functionality $\fptm$ is described in \figref{fig-fptm}. Note that this functionality captures setup assumptions that are commonly used for obtaining statistically secure OT protocols, such as the existence of a stateless noisy channel between the parties, cryptogates and pre-distributed correlated data.

\widthfigure{
\begin{center}
\textbf{Functionality} $\fptm{}$
\smallskip
\end{center}\\
$\fptm{}$ interacts with $\alice$ and $\bob$ and is parametrized by the conditional probability distribution $P_{V,W|X,Y}$.\\
\\
\textbf{Alice's Input:} Upon receiving (\textsc{Input}, $\sid$, $x$) from $\alice$, if $x \in \mathcal{X}$ then record ($\sid$, $x$). Ignore future messages (\textsc{Input}, $\sid$, \ldots) from $\alice$.\\
\\
\textbf{Bob's Input:} Upon receiving (\textsc{Input}, $\sid$, $y$) from $\bob$, if $y \in \mathcal{Y}$ then record ($\sid$, $y$).
Ignore future messages (\textsc{Input}, $\sid$, \ldots) from $\bob$.\\
\\
\textbf{Output:} Upon obtained valid inputs from $\alice$ and $\bob$ for some $\sid$, pick $v, w$ according to $P_{V,W|X,Y}$ and output (\textsc{Output}, $\sid$, $v$) to $\alice$ and (\textsc{Output}, $\sid$, $w$) to $\bob$.\\
\\
}{The functionality that given valid inputs, samples outputs according to the conditional probability distribution and delivers the outputs to $\alice$ and $\bob$}{fig-fptm}

\section{Random Oblivious Transfer Based on Statistically Secure Two Party Stateless Functionalities}\label{sec:model}

In this section we define a stand-alone security model for random OT protocols that achieve statistical security by using $\fptm{}$ as a setup assumption. $\alice$ and $\bob$  have two resources available between them:

\begin{itemize}
\item a bidirectional authenticated noiseless channel denoted as $\fauth$ and
\item the functionality $\fptm{}$.
\end{itemize}
We model the probabilistic choices of $\alice$ by a random variable $\coinsa$ and those of $\bob$ by a random variable $\coinsb$, so that we can use deterministic functions in the protocol. 
As usual, we assume that the noiseless messages exchanged by the players and their personal randomness are taken from $\{0,1\}^*$.

\paragraph{Protocol $\pi$} $\alice$ and $\bob$ interact and in the end of the execution $\alice$ gets $(b_0, b_1)$ and $\bob$ gets $(c, b_c)$, for $b_0, b_1,c \in \{0, 1\}$ picked uniformly at random.
The security parameter is $n$, and determines how many times the parties can use the functionality $\fptm{}$: in the $i$-th round $\alice$ and $\bob$ input symbols $x_i$ and $y_i$ to the functionality $\fptm{}$, which generates the outputs $v_i$ and $w_i$ according to $P_{V,W|X,Y}$ and delivers them to $\alice$ and $\bob$, respectively. Let $x^i$, $y^i$, $v^i$ and $w^i$ denote the vectors of these variables until $i$-th round. The parties can use $\fauth$ at any moment. Let $\trans$ denote all the noiseless messages exchanged between the players.

We call the view of $\bob$ all the data in his possession, i.e. $y^{n}, w^n, c, \allowbreak \coinsb$ and $\trans$, and denote it by $\viewbob$. $\viewalice$ is defined similarly. We denote the output of the (possibly malicious) parties $\alice$ and $\bob$ by $\oalice$ and $\obob$, respectively. The stand-alone definition of security that is henceforth considered in this paper follows the lines of Cr\'{e}peau and Wullschleger~\cite{ICITS:CreWul08}. The protocol is said to be secure if there exists an $\epsilon$ that is a negligible function of the security parameter $n$ and is such that the following properties are satisfied:

\paragraph{Correctness} If both parties are honest, then $\oalice= (b_0, b_1)$ and $\obob=(c, d)$ for $d \in \{0, 1\}$ and uniformly random $b_0, b_1,c \in \{0, 1\}$. Additionally,
$$\Pr[D=B_C]\geq 1 - \epsilon.$$

\paragraph{Security for $\alice$} If $\alice$ is honest, then $\oalice= (b_0, b_1)$ for uniformly random $b_0, b_1 \in \{0, 1\}$ and there exists a random variable $C$ such that $$I_S(B_0,B_1;C) \leq \epsilon$$
and  $$I_S(B_0,B_1;\obob|C,B_{C}) \leq \epsilon.$$

\paragraph{Security for $\bob$} If $\bob$ is honest, then $\obob=(c, d)$ for $d \in \{0, 1\}$ and uniformly random $c \in \{0, 1\}$; and 
$$I_S(C;\oalice) \leq \epsilon .$$

\section{UC-Security Implication}

In this section we address the question of whether random OT protocols that are secure according to the definitions of \secref{sec:model} also enjoy statistical UC-security. We will show that this is indeed the case.
Intuitively this follows from the fact that the security in those protocols is based on the correlated randomness that is provided by the functionality $\fptm{}$ to $\alice$ and $\bob$. Since in the ideal world the simulator controls 
 $\fptm{}$, it can leverage this knowledge in order to extract the outputs of the corrupted parties and forward them to the random oblivious transfer functionality $\frot{}$, thus allowing the ideal execution to be indistinguishable from the real execution from the environment's point of view. First we prove some lemmas that will be used later on to prove the main result of this work.

We first show that in any random OT protocol that is stand-alone secure, if $\bob$ is honest, then given $\alice$'s input to and output from the functionality $\fptm{}$
and all the noiseless communication exchanged by $\alice$ and $\bob$ through $\fauth$ it is possible to extract both outputs that $\bob$ would get with $c=0$ and $c=1$ in the random OT protocol.

\begin{lemma}
\label{lem:exs}
Let $\pi$ be a stand-alone secure random OT protocol and let $\bob$ be honest. Given $\alice$'s input to and output from $\fptm{}$ and all the noiseless communication exchanged by $\alice$ and $\bob$ through $\fauth$ during the execution of $\pi$, with overwhelming probability it is possible to extract the output that $\bob$ would get both in the case that $c=0$ and $c=1$.
\end{lemma}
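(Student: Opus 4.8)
The plan is to exhibit an unbounded extractor $\Ext$ that, on input $\alice$'s $\fptm$-data $(x^n,v^n)$ together with the authenticated transcript $\trans$, outputs a pair $(d_0,d_1)$ equal to the outputs honest $\bob$ would produce under $c=0$ and $c=1$, except with negligible probability. By the correctness property, $\bob$'s output for choice $c$ equals $\alice$'s $c$-th bit $b_c$ except with probability $\epsilon$; hence it suffices to recover the pair $(b_0,b_1)$ from $(x^n,v^n,\trans)$. Since our adversarial model grants the extractor (and the simulator) unbounded power, $\Ext$ may compute exact conditional distributions and perform exact rejection sampling.

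Concretely, because $\fptm$ is stateless and parametrized by the public distribution $P_{V,W|X,Y}$, and because honest $\bob$ runs a fixed algorithm driven by $\coinsb$, the joint law of $\bob$'s internal state $(y^n,w^n,\coinsb,c)$ is well defined, as is its conditional law given $(x^n,v^n,\trans)$. The extractor samples an honest-$\bob$ state $(\hat y^n,\hat w^n,\hat r)$ — where $\hat r$ is a sampled value of the coins $\coinsb$ — from this conditional distribution, i.e.\ a $\bob$-state consistent with $\alice$ having seen $(x^n,v^n)$ and with $\trans$ being produced, and evaluates $\bob$'s output function on it once with $c=0$ and once with $c=1$, returning the two bits as $(d_0,d_1)$.

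The analysis reduces to the single claim that $(b_0,b_1)$ is, up to negligible error, a deterministic function of $(x^n,v^n,\trans)$. I would establish this from the \textbf{Security for $\alice$} property. Suppose, toward a contradiction, that conditioned on $(x^n,v^n,\trans)$ one of the bits, say $b_0$, retained non-negligible uncertainty. By correctness this bit is nonetheless recovered by honest $\bob$ whenever $c=0$, so the missing information is a function of $\bob$'s state $(y^n,w^n,\coinsb)$ together with $\trans$. Crucially, this reconstruction uses $\bob$'s state before the choice bit selects which output to emit, so the \emph{same} state lets a malicious $\bob$ compute $b_0$ irrespective of the bit it is legitimately entitled to; combined with that legitimate bit this yields both $b_0$ and $b_1$, contradicting $\sinf(B_0,B_1;\obob\mid C,B_C)\le\epsilon$. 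Hence $(b_0,b_1)$ is negligibly close to determined by $(x^n,v^n,\trans)$, so the value the extractor computes on an independently sampled-but-consistent $\bob$-state matches the true $(b_0,b_1)$ with overwhelming probability.

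The counterfactual bit — the one for the choice $\bob$ did not realize — requires that evaluating the reconstructed state under the opposite choice is faithful. Here I would invoke \textbf{Security for $\bob$}: considering the admissible malicious $\alice$ that outputs her entire view, $\sinf(C;\oalice)\le\epsilon$ yields that $(x^n,v^n,\trans)$ is $\epsilon$-close to independent of $c$; thus the conditional law of a consistent $\bob$-state given $(x^n,v^n,\trans)$ is essentially the same under either realized choice, the sampled state is a faithful draw for both, and correctness applied to each choice returns the corresponding bit. The main obstacle is making rigorous the ``reconstruction precedes the choice'' step in the presence of $n$ rounds of adaptive interaction with $\fptm$ and $\fauth$, where $\bob$'s later inputs may depend on earlier outputs; I expect to handle this by a round-by-round hybrid that bounds the statistical distance introduced at each of the $n$ uses of $\fptm$ and sums these to an $O(n\cdot\epsilon)$ error, which remains negligible in $n$.
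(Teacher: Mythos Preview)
Your proposal has two genuine gaps.

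First, you treat $c$ as a parameter you can feed into $\bob$'s output function: you sample one consistent state $(\hat y^n,\hat w^n,\hat r)$ and then ``evaluate $\bob$'s output function on it once with $c=0$ and once with $c=1$''. But in a \emph{random} OT protocol $c$ is not an input; it is part of $\bob$'s output, determined by $\coinsb$ together with the messages $\bob$ receives. A single sampled state fixes a single value of $c$. To get the output corresponding to the other choice you must exhibit a \emph{different} set of coins $\overline{\coinsb}$ that is still consistent with the same $m=(x^n,v^n,\trans)$ yet drives $\bob$ to the opposite $c$. This is exactly what the paper does: it argues, directly from Security for $\bob$, that such $\overline{\coinsb}$ must exist with overwhelming probability (otherwise $\alice$'s view $m$ would pin down $c$, contradicting $\sinf(C;\oalice)\le\epsilon$), and the extractor simply searches for both coin settings and simulates $\bob$ twice.

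Second, your reduction to ``$(b_0,b_1)$ is determined by $(x^n,v^n,\trans)$'' invokes Correctness and Security for $\alice$. Both of these hypotheses require $\alice$ to be honest, yet the lemma is applied in the proof of the main theorem precisely in the case where only $\alice$ is corrupted. With a malicious $\alice$ there need not be any well-defined pair $(b_0,b_1)$ on $\alice$'s side to speak of; the quantities to extract are whatever honest $\bob$ would output under each realized $c$, and the only stand-alone property available is Security for $\bob$. The paper's proof uses only that property, and in particular needs neither your determinism claim nor the round-by-round hybrid you anticipate.
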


\begin{proof}
Lets consider an execution of the protocol $\pi$ in which $\bob$ has random coins $\coinsb$ and gets $\obob=(c, d)$ for $d \in \{0, 1\}$ and uniformly random $c \in \{0, 1\}$ (as the protocol is stand-alone secure).
Denote by $m$ the set of messages exchanged between $\alice$ and $\fptm{}$ concatenated with the noiseless messages  between $\bob$ and $\alice$.  We claim that there should exist $\overline{\coinsb} \neq \coinsb$ so that for the same $m$,  if $\bob$ executed the protocol with $\overline{\coinsb}$ he should have been able with overwhelming probability to get $\overline{\obob}=(\overline{c},\overline{d})$ with $\overline{c} \neq c$ and
$\overline{d} \in \{0, 1\}$. If that were not the case,  $\alice$ would know that $\bob$ is unable to obtain a valid output $\overline{d}$ when the choice bit is $\overline{c}$, thus gaining knowledge on the choice bit and breaking the protocol security. Given that
$$I_S(\oalice;C) \leq \epsilon ,$$
we get
$$\delta (P_{\oalice C},  P_{\oalice} P_{C})\leq \epsilon,$$
and so there are events $\calE_1$ and $\calE_2$ such that 
$$\Pr[\calE_1]=\Pr[\calE_2]= 1 - \epsilon \text{ and}$$
$$P_{\oalice C|\calE_1}=  P_{\oalice|\calE_2} P_{C|\calE_2}.$$
Therefore if $\calE_1$ and $\calE_2$ happen, then $\oalice$ does not provide information about $C$ and $\overline{\coinsb}$ should exist. Thus, given $m$ we are left with an extraction procedure. One just computes $\coinsb$ and $\overline{\coinsb}$ that for this $m$ produce outputs $\obob$ and $\overline{\obob}$, respectively, and simulates the protocol execution for each specified $\bob$'s randomness.\end{proof}

We now prove that given access to the messages that $\bob$ exchanges with $\alice$ and $\fptm{}$, there is a point in the protocol execution in which it is possible to extract the choice bit $c$ and still equivocate  
$b_0,b_1$ to any value, i.e., it is possible to find an $\alice$' view that is compatible with the current view of $\bob$ and the new values of $b_0$ and $b_1$. 

\begin{lemma}
\label{lem:exr}
Let $\pi$ be a stand-alone secure random OT protocol and let $\alice$ be honest. Given access to all messages that $\bob$'s exchanges with $\fptm{}$ and all the noiseless communication exchanged by $\alice$ and $\bob$ through $\fauth$ during the execution of $\pi$, with overwhelming probability it is possible to extract the choice bit $c$ at some point of the execution of the protocol $\pi$. Additionally at this point it is still possible to change $b_0$ and $b_1$ to any desired values.
\end{lemma}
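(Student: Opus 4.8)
The plan is to prove this as the exact dual of \lemref{lem:exs}, now exploiting the two security-for-\alice\ guarantees instead of the security-for-\bob\ one. First I would invoke the security for \alice\ to obtain the effective choice-bit random variable $C$ together with the bounds $\sinf(B_0,B_1;C)\le\epsilon$ and $\sinf(B_0,B_1;\obob|C,B_C)\le\epsilon$. As in \lemref{lem:exs}, I would turn each statistical-information bound into a pair of overwhelmingly likely events on which the relevant distributions factor exactly: from $\sinf(B_0,B_1;C)\le\epsilon$, i.e. $\delta(P_{B_0B_1C},P_{B_0B_1}P_C)\le\epsilon$, there are events of probability $1-\epsilon$ under which $P_{B_0B_1C}$ factors as $P_{B_0B_1}P_C$, and similarly from the conditional bound \bob's output factors away from $(B_0,B_1)$ once $C$ and $B_C$ are fixed. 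Conditioning on these events costs only a negligible amount in the final statistical distance.

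For the extraction itself I would argue that, on the good events, $C$ is a function of precisely the data the lemma allows the extractor to see --- the messages \bob\ sends to \fptm\ together with the noiseless transcript through \fauth\ --- and not of the values $b_0,b_1$. Indeed, $\sinf(B_0,B_1;C)\le\epsilon$ says that $C$ carries essentially no information correlated with the output of \alice, so determining $C$ cannot require any part of the view that pins down $b_0,b_1$. The extraction procedure then mirrors the one in \lemref{lem:exs}: simulate \bob's strategy (deterministic once $\coinsb$ is fixed) over all coin settings $\coinsb$ consistent with the observed messages to \fptm\ and the noiseless transcript, and read off the value $c$ appearing in \bob's output; with overwhelming probability this value is the same for every consistent $\coinsb$ and coincides with $C$. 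I would take the extraction point to be the first round at which this value is forced, i.e. at which the conditional distribution of $C$ given the visible data collapses to a single point.

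It remains to show that at this point $b_0$ and $b_1$ are still free. For $b_{\overline c}$ this is immediate from $\sinf(B_0,B_1;\obob|C,B_C)\le\epsilon$: once $C$ and $B_C$ are fixed, \bob's output --- hence his whole view --- is, up to $\epsilon$, independent of $B_{\overline c}$, so $b_{\overline c}$ may be reset arbitrarily without changing the distribution seen by \bob. For $b_c$ I would argue that the extraction point necessarily precedes the round in which the view of \bob\ determines $B_C$: since $C$ is independent of $(B_0,B_1)$, the choice bit is forced by the ``input side'' of \bob's interaction before the chosen value $b_c$ has been transmitted, so at the extraction point $b_c$ is not yet committed and can be set to any target value delivered later. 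Combining the two gives equivocation on the full pair $(b_0,b_1)$.

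The hard part will be this last step: converting the two global information-theoretic bounds into a statement about a single, well-defined moment of the execution --- namely that there really is one round at which $C$ is already determined from the visible data yet neither $b_0$ nor $b_1$ has been fixed. One must rule out executions in which ``$C$ becomes determined'' and ``$B_C$ becomes determined'' are entangled round by round, which I expect to handle by tracking, round by round, the conditional distributions of $C$ and of $B_C$ given the transcript so far and the messages \bob\ has sent to \fptm, and locating the first round where the former is a point mass while the latter is still non-degenerate; the security bounds are exactly what guarantee that such a round exists except on an $\epsilon$-fraction of executions.
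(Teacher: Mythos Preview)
Your extraction strategy has a real gap. The random variable $C$ in the security-for-\alice\ definition is only \emph{asserted to exist}; it is not given as a function of any particular data. Showing $\sinf(B_0,B_1;C)\le\epsilon$ tells you that $C$ is (almost) independent of $(B_0,B_1)$, but it does \emph{not} tell you that $C$ is determined by \bob's messages to $\fptm$ together with the noiseless transcript. A priori $C$ could depend on \bob's private coins, or on internal state of the adversary that never surfaces in the messages the extractor is allowed to see. Your sentence ``determining $C$ cannot require any part of the view that pins down $b_0,b_1$'' only rules out dependence on $(b_0,b_1)$; it does not rule out dependence on other hidden randomness. The fallback you propose --- simulate over all $\coinsb$ consistent with the observed data and ``read off the value $c$ appearing in \bob's output'' --- does not work either: here \bob\ is the corrupt party, so $\obob$ is an arbitrary string produced by the adversary and need not contain any well-defined choice bit, let alone one that agrees with the abstract $C$.

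The paper avoids this by not trying to recover the abstract $C$ at all. It defines the extracted bit \emph{operationally}: let $\calM(0)$ (resp.\ $\calM(1)$) be the set of transcripts $m$ (messages between \bob\ and \fptm\ plus the noiseless communication) from which \bob\ can obtain $b_0$ (resp.\ $b_1$) with overwhelming probability; the two bounds $\sinf(B_0,B_1;C)\le\epsilon$ and $\sinf(B_0,B_1;\obob\mid C,B_C)\le\epsilon$ then force $\calM(0)\cap\calM(1)$ to have negligible mass, so the extractor simply tests membership. For equivocation the paper also argues more directly than your round-by-round tracking plan: if at the first point where the choice bit is fixed either $b_0$ or $b_1$ were already fixed from \bob's point of view, then \bob\ could still switch his choice (or already has what he needs) and obtain both bits, contradicting security for \alice; and the fixing message cannot be the last one for the same reason. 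This contradiction argument delivers the ``single well-defined moment'' you flagged as the hard part, without the conditional-distribution bookkeeping you anticipated.
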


\begin{proof}
We first prove that there is a point in the protocol execution where we can extract the choice bit given the messages that $\bob$ exchanged with $\alice$ and the functionality $\fptm{}$. Let $m$ denote these messages in a given protocol execution. Let $\calM(0)$ denote the set of messages that allow $\bob$ to obtain the bit $b_0$ with overwhelming probability (the probability taken over $\coinsa$, $\coinsb$ and the randomness of $\fptm{}$). And let $\calM(1)$ be defined similarly for $b_1$. From the stand-alone security for Alice we have that

$$I_S(B_0,B_1;C) \leq \epsilon$$ and  $$I_S(B_0,B_1;\obob|C,B_{C}) \leq \epsilon,$$ and so we get that $m$ with overwhelming probability (over $\coinsa$ and the randomness of $\fptm{}$) cannot be in both $\calM(0)$ and $\calM(1)$, since this fact would imply that the resulting protocol would be insecure for $\alice$. This fact gives us a procedure for obtaining the choice bit $c$ given $m$. We just check if $m$ is in $\calM(0)$ or $\calM(1)$. 

We now turn to the equivocation property. From the previous reasoning, we know that there should exist a point in the protocol where $\bob$ sends a message to $\alice$ that fixes the choice bit (i.e. the choice bit can be extracted from his messages from/to $\alice$ and $\fptm{}$). Let $i$ be the index of such message. Suppose the $i$-th message is the very last one in the protocol. Then $\bob$ has all the information necessary to compute his output even before sending the $i$-th message. As the choice bit is only fixed in the next message, $\bob$ should be able to compute both $b_0$ and $b_1$, breaking $\alice$'s security.
Thus, the $i$-th message should not be the last one. The same reasoning implies that from $\bob$'s point of view, none of $\alice$'s outputs $b_0$ and $b_1$ can be fixed before the $i$-th message: (1) if both $b_0$ and $b_1$ are fixed from $\bob$'s point of view before the $i$-th message, then he could obtain both $b_0$ and $b_1$ and break the stand-alone security; (2) if only $b_i$ is fixed, then $\bob$ can still change his choice to $c=1-i$ and obtain both $b_0$ and $b_1$, thus breaking the stand-alone security. Therefore, we should have that when the $i$-th message is sent by $\bob$, $\alice$'s outputs $b_0$ and $b_1$ are still equivocable.\end{proof}

We now use two lemmas to prove our main result:
\begin{theorem}
\label{the:uc}
Any stand-alone statistically secure protocol $\pi$ of random oblivious transfer based on $\fauth$ and $\fptm{}$ UC-realizes $\frot{}$.
\end{theorem}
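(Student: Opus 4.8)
The plan is to construct, for every real-world adversary \adv{}, a simulator \s{} and to prove that no (unbounded) environment \env{} can distinguish the real execution of $\pi$ (using \fauth and \fptm{}) from the ideal execution with \frot{}, except with negligible probability. The key leverage is that in the ideal world \s{} emulates the setup functionality \fptm{} toward any corrupted party, so it observes that party's inputs to \fptm{} and supplies its outputs, and it also sees all noiseless traffic over \fauth. I would split the argument by corruption pattern. When both parties are honest, \s{} runs an internal honest execution of $\pi$ and the only discrepancy between the delivered outputs and those of \frot{} comes from the correctness slack $\Pr[D=B_C]\geq 1-\epsilon$, which is negligible; when both parties are corrupted the simulation is immediate. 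The content is therefore in the two single-corruption cases, each driven by one of the lemmas above.

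For a corrupted \alice{}, \s{} internally simulates the honest \bob{}, forwards \adv{}'s messages to and from \env{}, and emulates \fptm{}, thereby obtaining \alice{}'s inputs to and outputs from \fptm{} together with the full transcript. By \lemref{lem:exs}, from exactly this data \s{} can, with overwhelming probability, extract the values that \bob{} would output for $c=0$ and for $c=1$; these are precisely the effective sender bits $(b_0,b_1)$. \s{} submits $(b_0,b_1)$ to \frot{} as \alice{}'s check-in, so that \frot{} delivers $b_c$ to the honest \bob{} for the choice bit it samples. Indistinguishability then reduces to comparing the real and simulated distributions of the transcript together with both parties' outputs: the simulated transcript is produced by the genuine honest-\bob{} code, and since \alice{}-security ($I_S(C;\oalice)\leq\epsilon$) makes this transcript essentially independent of the choice bit, the pair $(c,b_c)$ freshly delivered by \frot{} matches the real honest-\bob{} output up to the negligible extraction error.

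The hard part will be the corrupted-\bob{} case, where equivocation is required. Here \s{} simulates the honest \alice{} but cannot yet know which sender bit will matter, so it begins the execution with placeholder values for $b_0,b_1$. By \lemref{lem:exr} there is a message index $i$ at which the choice bit $c$ first becomes extractable from \bob{}'s traffic with \alice{} and \fptm{}, while $b_0$ and $b_1$ are still equivocable. At that point \s{} extracts $c$, sends it to \frot{} as \bob{}'s check-in, receives $b_c$, and then invokes the equivocation guarantee to select an \alice{}-view (coins together with a sequence of \fptm{} outputs consistent with the transcript produced so far) that yields $b_c$ as \bob{}'s output for choice $c$, with $b_{1-c}$ set arbitrarily since \bob{} never learns it. I expect this to be the main obstacle: one must verify that the equivocated continuation induces a distribution on the remaining transcript and on the joint outputs that is statistically close to the real one, and that the asynchronous scheduling lets \s{} reach index $i$, and hence fix $c$, before \frot{} is required to deliver. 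The two structural facts in \lemref{lem:exr}, namely that neither $b_0$ nor $b_1$ is fixed from \bob{}'s viewpoint before message $i$ and that $i$ is not the final message, are exactly what guarantees that such a consistent \alice{}-view exists.

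Finally I would collect the error terms. Each case introduces only negligible deviations: the correctness slack $\epsilon$, the security quantities $I_S(B_0,B_1;C)$, $I_S(B_0,B_1;\obob\mid C,B_C)$, and $I_S(C;\oalice)$, all bounded by $\epsilon$, and the overwhelming-probability guarantees of the two extraction lemmas. Summing these over the single session shows that \env{}'s distinguishing advantage is negligible in $n$, so $\pi$ statistically UC-realizes \frot{}; the UC composition theorem then yields security for arbitrarily many concurrent instances.
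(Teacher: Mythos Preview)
Your proposal is correct and follows essentially the same approach as the paper: the simulator emulates \fptm{} and the honest party, splits by corruption pattern, invokes \lemref{lem:exs} to extract $(b_0,b_1)$ when \alice{} is corrupted, and invokes \lemref{lem:exr} to extract $c$ and then equivocate when \bob{} is corrupted. Your write-up is in places more explicit than the paper's own proof (in particular regarding the placeholder-then-equivocate mechanics and the collection of error terms), but the structure and the key ideas coincide.
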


\begin{proof}
We construct the simulator $\s$ as follows. $\s$ runs a simulated copy of $\adv$ in a black-box way, plays the role of the ideal functionality $\fptm{}$ and simulates a copy of the hybrid interaction of $\pi$ for the simulated adversary $\adv$. In addition, $\s$ forwards the messages between $\env$ and $\adv$. Below we describe the procedures of the simulator in each occasion:

\textbf{Only $\alice$ is corrupted:} $\s$ samples the randomness $\coinsb$ of the simulated $\bob$ and proceeds with the simulated execution of the protocol $\pi$ by producing his noiseless messages as well as his inputs $y_i \in \calY$ to $\fptm{}$. Additionally, once the inputs $x_i \in \calX$ and $y_i \in \calY$ to $\fptm{}$ are fixed, $\s$ simulates the outputs of the functionality $\fptm{}$ and sends $v_i$ to $\adv$. As $\s$ plays the role of $\fptm{}$, when the execution is done, $\s$ extracts the output bits $b_0,b_1$ of the corrupted $\alice$ using the result of lemma~\ref{lem:exs} and forwards $b_0,b_1$ to $\frot{}$. $\s$ then allows $\frot{}$ to deliver the  output.
\\

\textbf{Only $\bob$ is corrupted:} $\s$ samples the randomness $\coinsa$ of the simulated $\alice$ and proceeds with the simulated execution of the protocol $\pi$ by producing her noiseless messages as well as her inputs $x_i \in \calX$ to $\fptm{}$. Additionally, once the inputs $x_i \in \calX$ and $y_i \in \calY$ to $\fptm{}$ are fixed, $\s$ simulates the outputs of the functionality $\fptm{}$ and sends $w_i$ to $\adv$. 
Then using the result of lemma~\ref{lem:exr}, $\s$ extracts the choice bit $c$ of the corrupted $\bob$, inputs $c$ to $\frot{}$, receives $b_c$ and finishes the simulated protocol execution in such way that the received bit in the hybrid interaction $b'_c$ is equal to the received bit in the ideal protocol $b_c$ with overwhelming probability.
\\

\textbf{Neither party is corrupted:} $\s$ samples the randomness $\coinsa$ and $\coinsb$ and proceeds with the simulated execution of the protocol $\pi$ by
simulating the noiseless messages as well as the inputs/outputs of $\fptm{}$, and reveals the noiseless messages to $\adv$. If the simulated $\bob$ would output $b'_{c'}$ in the hybrid interaction, then $\s$ allows $\frot{}$ to output the bit $b_c$.
\\

\textbf{Both parties are corrupted:} $\s$ just simulates $\fptm{}$.
\\

We analyze below the probabilities of the events that can result in different views for the environment $\env$ between the real world execution with the protocol $\pi$ and the adversary $\adv$, and the ideal world execution with functionality $\frot{}$ and 
the simulator $\s$:

\begin{itemize}
	\item When only $\alice$ is corrupted, $\env$'s view in the real and ideal worlds are equal if: (1) $\s$ succeeds to extract both of $\alice$'s outputs bits $b_0,b_1$ to forward to $\frot{}$; (2) $\adv$ does not learn the choice bit $c'$ in the simulated protocol execution. By lemma~\ref{lem:exs}, the extraction works with overwhelming probability. By the stand-alone security, with overwhelming probability $\adv$ does not learn $c'$.
		
	\item When only $\bob$ is corrupted, $\env$'s view in the real and ideal worlds are equal if: (1) $\s$ succeeds to extract the bit $c$ and finish the protocol in such way that the received bit $b'_c$ in the simulated protocol execution is equal to $b_c$; (2) $\adv$ cannot learn $b'_{\overline c}$ in the simulated protocol execution. By lemma~\ref{lem:exr}, the first condition is satisfied with overwhelming probability. By the stand-alone security, with overwhelming probability $\adv$ cannot learn $b'_{\overline c}$
	
	\item When neither party is corrupted, $\s$'s procedures statistically emulate the hybrid execution for the adversary $\adv$, as $\adv$ cannot learn $b'_0,b'_1,c'$ from the noiseless messages alone.
	\item When both parties are corrupted, $\s$'s procedures perfectly emulate the hybrid execution for the adversary $\adv$.	
\end{itemize}

We conclude that since all events that can result in different views have negligible probabilities, the protocol $\pi$ UC-realizes $\frot{}$.
\end{proof}

\section{Conclusion}

In this paper, we prove that random oblivious transfer protocols based on two-party stateless functionalities matching a list of security properties are universally composable when unbounded simulators are allowed. As previously commented, this assumption on the simulator gives us secure universal composability with other statistically secure protocols. The restriction to random oblivious transfer protocols is not restrictive (since random OT can be used to obtain OT for arbitrary inputs \cite{STOC:Beaver97}, proving the composability of such reduction is straightforward). And most of the OT protocols based on two-party stateless functionalities are in fact designed to initially run an internal random OT protocol and then derandomize the values. In this case the universally composability implication can be applied directly to the inner random OT protocol. However, it is an interesting problem to generalize the results presented here to arbitrary OT. Our result immediately imply that several previously proposed OT protocols can have their security upgraded for free \cite{FOCS:CreKil88,EC:Crepeau97,EC:DamKilSal99,ISIT:SteWol02,SCN:CreMorWol04,IEEEIT:NasWin08,IEEEIT:PDMN11,AhlCsi13,IEEEIT:DowNas17,STOC:Kilian00,C:BeiMalMic99,STOC:Beaver97,Rivest99,TCC:DotKraMul11,ICITS:DowMulNil15}.





\begin{thebibliography}{10}
\providecommand{\url}[1]{#1}
\csname url@samestyle\endcsname
\providecommand{\newblock}{\relax}
\providecommand{\bibinfo}[2]{#2}
\providecommand{\BIBentrySTDinterwordspacing}{\spaceskip=0pt\relax}
\providecommand{\BIBentryALTinterwordstretchfactor}{4}
\providecommand{\BIBentryALTinterwordspacing}{\spaceskip=\fontdimen2\font plus
\BIBentryALTinterwordstretchfactor\fontdimen3\font minus
  \fontdimen4\font\relax}
\providecommand{\BIBforeignlanguage}[2]{{%
\expandafter\ifx\csname l@#1\endcsname\relax
\typeout{** WARNING: IEEEtran.bst: No hyphenation pattern has been}%
\typeout{** loaded for the language `#1'. Using the pattern for}%
\typeout{** the default language instead.}%
\else
\language=\csname l@#1\endcsname
\fi
#2}}
\providecommand{\BIBdecl}{\relax}
\BIBdecl

\bibitem{TR:Rabin81}
M.~O. Rabin, ``How to exchange secrets by oblivious transfer,'' Aiken
  Computation Laboratory, Harvard University, Tech. Rep. Technical Memo TR-81,
  1981.

\bibitem{STOC:GolMicWig87}
O.~Goldreich, S.~Micali, and A.~Wigderson, ``How to play any mental game or {A}
  completeness theorem for protocols with honest majority,'' in \emph{19th
  Annual {ACM} Symposium on Theory of Computing}, A.~Aho, Ed.\hskip 1em plus
  0.5em minus 0.4em\relax {ACM} Press, May 1987, pp. 218--229.

\bibitem{STOC:Kilian88}
J.~Kilian, ``Founding cryptography on oblivious transfer,'' in \emph{20th
  Annual {ACM} Symposium on Theory of Computing}.\hskip 1em plus 0.5em minus
  0.4em\relax {ACM} Press, May 1988, pp. 20--31.

\bibitem{C:Crepeau87}
C.~Cr{\'e}peau, ``Equivalence between two flavours of oblivious transfers,'' in
  \emph{Advances in Cryptology -- {CRYPTO}'87}, ser. Lecture Notes in Computer
  Science, C.~Pomerance, Ed., vol. 293.\hskip 1em plus 0.5em minus 0.4em\relax
  Springer, Heidelberg, Aug. 1988, pp. 350--354.

\bibitem{GolMicRac89}
S.~Goldwasser, S.~Micali, and C.~Rackoff, ``The knowledge complexity of
  interactive proof systems,'' \emph{{SIAM} Journal on Computing}, vol.~18,
  no.~1, pp. 186--208, 1989.

\bibitem{C:Beaver91a}
D.~Beaver, ``Foundations of secure interactive computing,'' in \emph{Advances
  in Cryptology -- {CRYPTO}'91}, ser. Lecture Notes in Computer Science,
  J.~Feigenbaum, Ed., vol. 576.\hskip 1em plus 0.5em minus 0.4em\relax
  Springer, Heidelberg, Aug. 1992, pp. 377--391.

\bibitem{JC:Canetti00}
R.~Canetti, ``Security and composition of multiparty cryptographic protocols,''
  \emph{Journal of Cryptology}, vol.~13, no.~1, pp. 143--202, 2000.

\bibitem{FOCS:Canetti01}
------, ``Universally composable security: A new paradigm for cryptographic
  protocols,'' in \emph{42nd Annual Symposium on Foundations of Computer
  Science}.\hskip 1em plus 0.5em minus 0.4em\relax {IEEE} Computer Society
  Press, Oct. 2001, pp. 136--145.

\bibitem{STOC:KusLinRab06}
E.~Kushilevitz, Y.~Lindell, and T.~Rabin, ``Information-theoretically secure
  protocols and security under composition,'' in \emph{38th Annual {ACM}
  Symposium on Theory of Computing}, J.~M. Kleinberg, Ed.\hskip 1em plus 0.5em
  minus 0.4em\relax {ACM} Press, May 2006, pp. 109--118.

\bibitem{TCC:BackMulUnr07}
M.~Backes, J.~M{\"u}ller-Quade, and D.~Unruh, ``On the necessity of rewinding
  in secure multiparty computation,'' in \emph{TCC~2007: 4th Theory of
  Cryptography Conference}, ser. Lecture Notes in Computer Science, S.~P.
  Vadhan, Ed., vol. 4392.\hskip 1em plus 0.5em minus 0.4em\relax Springer,
  Heidelberg, Feb. 2007, pp. 157--173.

\bibitem{STOC:Beaver97}
D.~Beaver, ``Commodity-based cryptography (extended abstract),'' in \emph{29th
  Annual {ACM} Symposium on Theory of Computing}.\hskip 1em plus 0.5em minus
  0.4em\relax {ACM} Press, May 1997, pp. 446--455.

\bibitem{FOCS:CreKil88}
C.~Cr{\'e}peau and J.~Kilian, ``Achieving oblivious transfer using weakened
  security assumptions (extended abstract),'' in \emph{29th Annual Symposium on
  Foundations of Computer Science}.\hskip 1em plus 0.5em minus 0.4em\relax
  {IEEE} Computer Society Press, Oct. 1988, pp. 42--52.

\bibitem{EC:Crepeau97}
C.~Cr{\'e}peau, ``Efficient cryptographic protocols based on noisy channels,''
  in \emph{Advances in Cryptology -- {EUROCRYPT}'97}, ser. Lecture Notes in
  Computer Science, W.~Fumy, Ed., vol. 1233.\hskip 1em plus 0.5em minus
  0.4em\relax Springer, Heidelberg, May 1997, pp. 306--317.

\bibitem{EC:DamKilSal99}
I.~Damg{\aa}rd, J.~Kilian, and L.~Salvail, ``On the (im)possibility of basing
  oblivious transfer and bit commitment on weakened security assumptions,'' in
  \emph{Advances in Cryptology -- {EUROCRYPT}'99}, ser. Lecture Notes in
  Computer Science, J.~Stern, Ed., vol. 1592.\hskip 1em plus 0.5em minus
  0.4em\relax Springer, Heidelberg, May 1999, pp. 56--73.

\bibitem{ISIT:SteWol02}
D.~Stebila and S.~Wolf, ``Efficient oblivious transfer from any non-trivial
  binary-symmetric channel,'' in \emph{Information Theory, 2002. Proceedings.
  2002 IEEE International Symposium on}, Lausanne, Switzerland,
  Jun.~30~--~Jul.~5, 2002, p. 293.

\bibitem{SCN:CreMorWol04}
C.~Cr{\'e}peau, K.~Morozov, and S.~Wolf, ``Efficient unconditional oblivious
  transfer from almost any noisy channel,'' in \emph{SCN 04: 4th International
  Conference on Security in Communication Networks}, ser. Lecture Notes in
  Computer Science, C.~Blundo and S.~Cimato, Eds., vol. 3352.\hskip 1em plus
  0.5em minus 0.4em\relax Springer, Heidelberg, Sep. 2005, pp. 47--59.

\bibitem{IEEEIT:NasWin08}
A.~C.~A. Nascimento and A.~Winter, ``On the oblivious-transfer capacity of
  noisy resources,'' \emph{Information Theory, IEEE Transactions on}, vol.~54,
  no.~6, pp. 2572--2581, Jun. 2008.

\bibitem{IEEEIT:PDMN11}
A.~C.~B. Pinto, R.~Dowsley, K.~Morozov, and A.~C.~A. Nascimento, ``Achieving
  oblivious transfer capacity of generalized erasure channels in the malicious
  model,'' \emph{Information Theory, IEEE Transactions on}, vol.~57, no.~8, pp.
  5566--5571, Aug. 2011.

\bibitem{AhlCsi13}
R.~Ahlswede and I.~Csisz\'ar, ``On oblivious transfer capacity,'' in
  \emph{Information Theory, Combinatorics, and Search Theory}, ser. Lecture
  Notes in Computer Science, H.~Aydinian, F.~Cicalese, and C.~Deppe, Eds.\hskip
  1em plus 0.5em minus 0.4em\relax Springer Berlin Heidelberg, 2013, vol. 7777,
  pp. 145--166.

\bibitem{IEEEIT:DowNas17}
R.~Dowsley and A.~C.~A. Nascimento, ``On the oblivious transfer capacity of
  generalized erasure channels against malicious adversaries: The case of low
  erasure probability,'' \emph{IEEE Transactions on Information Theory},
  vol.~63, no.~10, pp. 6819--6826, Oct 2017.

\bibitem{STOC:Kilian00}
J.~Kilian, ``More general completeness theorems for secure two-party
  computation,'' in \emph{32nd Annual {ACM} Symposium on Theory of
  Computing}.\hskip 1em plus 0.5em minus 0.4em\relax {ACM} Press, May 2000, pp.
  316--324.

\bibitem{C:BeiMalMic99}
A.~Beimel, T.~Malkin, and S.~Micali, ``The all-or-nothing nature of two-party
  secure computation,'' in \emph{Advances in Cryptology -- {CRYPTO}'99}, ser.
  Lecture Notes in Computer Science, M.~J. Wiener, Ed., vol. 1666.\hskip 1em
  plus 0.5em minus 0.4em\relax Springer, Heidelberg, Aug. 1999, pp. 80--97.

\bibitem{Rivest99}
R.~L. Rivest, ``Unconditionally secure commitment and oblivious transfer
  schemes using private channels and a trusted initializer,'' 1999, preprint
  available at http://people.csail.mit.edu/rivest/Rivest- commitment.pdf.

\bibitem{TCC:DotKraMul11}
N.~D{\"o}ttling, D.~Kraschewski, and J.~M{\"u}ller-Quade, ``Unconditional and
  composable security using a single stateful tamper-proof hardware token,'' in
  \emph{TCC~2011: 8th Theory of Cryptography Conference}, ser. Lecture Notes in
  Computer Science, Y.~Ishai, Ed., vol. 6597.\hskip 1em plus 0.5em minus
  0.4em\relax Springer, Heidelberg, Mar. 2011, pp. 164--181.

\bibitem{ICITS:DowMulNil15}
R.~Dowsley, J.~M{\"u}ller-Quade, and T.~Nilges, ``Weakening the isolation
  assumption of tamper-proof hardware tokens,'' in \emph{ICITS 15: 8th
  International Conference on Information Theoretic Security}, ser. Lecture
  Notes in Computer Science, A.~Lehmann and S.~Wolf, Eds., vol. 9063.\hskip 1em
  plus 0.5em minus 0.4em\relax Springer, Heidelberg, May 2015, pp. 197--213.

\bibitem{EveGolLem85}
\BIBentryALTinterwordspacing
S.~Even, O.~Goldreich, and A.~Lempel, ``A randomized protocol for signing
  contracts,'' \emph{Commun. ACM}, vol.~28, no.~6, pp. 637--647, Jun. 1985.
  [Online]. Available: \url{http://doi.acm.org/10.1145/3812.3818}
\BIBentrySTDinterwordspacing

\bibitem{Goldreich04}
O.~Goldreich, \emph{Foundations of Cryptography: Basic Applications}.\hskip 1em
  plus 0.5em minus 0.4em\relax Cambridge, UK: Cambridge University Press, 2004,
  vol.~2.

\bibitem{C:BelMic89}
M.~Bellare and S.~Micali, ``Non-interactive oblivious transfer and
  spplications,'' in \emph{Advances in Cryptology -- {CRYPTO}'89}, ser. Lecture
  Notes in Computer Science, G.~Brassard, Ed., vol. 435.\hskip 1em plus 0.5em
  minus 0.4em\relax Springer, Heidelberg, Aug. 1990, pp. 547--557.

\bibitem{SODA:NaoPin01}
M.~Naor and B.~Pinkas, ``Efficient oblivious transfer protocols,'' in
  \emph{12th Annual {ACM}-{SIAM} Symposium on Discrete Algorithms}, S.~R.
  Kosaraju, Ed.\hskip 1em plus 0.5em minus 0.4em\relax {ACM-SIAM}, Jan. 2001,
  pp. 448--457.

\bibitem{C:PeiVaiWat08}
C.~Peikert, V.~Vaikuntanathan, and B.~Waters, ``A framework for efficient and
  composable oblivious transfer,'' in \emph{Advances in Cryptology --
  {CRYPTO}~2008}, ser. Lecture Notes in Computer Science, D.~Wagner, Ed., vol.
  5157.\hskip 1em plus 0.5em minus 0.4em\relax Springer, Heidelberg, Aug. 2008,
  pp. 554--571.

\bibitem{CANS:DavDowNas14}
B.~David, R.~Dowsley, and A.~C.~A. Nascimento, ``Universally composable
  oblivious transfer based on a variant of {LPN},'' in \emph{CANS 14: 13th
  International Conference on Cryptology and Network Security}, ser. Lecture
  Notes in Computer Science, D.~Gritzalis, A.~Kiayias, and I.~G. Askoxylakis,
  Eds., vol. 8813.\hskip 1em plus 0.5em minus 0.4em\relax Springer, Heidelberg,
  Oct. 2014, pp. 143--158.

\bibitem{ICITS:DGMN08}
R.~Dowsley, J.~van~de Graaf, J.~{M{\"u}ller-Quade}, and A.~C.~A. Nascimento,
  ``Oblivious transfer based on the {McEliece} assumptions,'' in \emph{ICITS
  08: 3rd International Conference on Information Theoretic Security}, ser.
  Lecture Notes in Computer Science, R.~Safavi-Naini, Ed., vol. 5155.\hskip 1em
  plus 0.5em minus 0.4em\relax Springer, Heidelberg, Aug. 2008, pp. 107--117.

\bibitem{IEICE:DGMN12}
R.~Dowsley, J.~van~de Graaf, J.~M{\"u}ller-Quade, and A.~C.~A. Nascimento,
  ``Oblivious transfer based on the {McEliece} assumptions,'' \emph{IEICE
  Transactions on Fundamentals of Electronics, Communications and Computer
  Sciences}, vol. E95-A, no.~2, pp. 567--575, 2012.

\bibitem{FOCS:CacCreMar98}
C.~Cachin, C.~Cr{\'e}peau, and J.~Marcil, ``Oblivious transfer with a
  memory-bounded receiver,'' in \emph{39th Annual Symposium on Foundations of
  Computer Science}.\hskip 1em plus 0.5em minus 0.4em\relax {IEEE} Computer
  Society Press, Nov. 1998, pp. 493--502.

\bibitem{ISIT:DowLacNas14}
R.~Dowsley, F.~Lacerda, and A.~C.~A. Nascimento, ``Oblivious transfer in the
  bounded storage model with errors,'' in \emph{Information Theory (ISIT), 2014
  IEEE International Symposium on}, Honolulu, HI, USA, Jun.~29~--~Jul.~4, 2014,
  pp. 1623--1627.

\bibitem{TCC:DHRS04}
Y.~Z. Ding, D.~Harnik, A.~Rosen, and R.~Shaltiel, ``Constant-round oblivious
  transfer in the bounded storage model,'' in \emph{TCC~2004: 1st Theory of
  Cryptography Conference}, ser. Lecture Notes in Computer Science, M.~Naor,
  Ed., vol. 2951.\hskip 1em plus 0.5em minus 0.4em\relax Springer, Heidelberg,
  Feb. 2004, pp. 446--472.

\bibitem{IEEEIT:DowLacNas18}
R.~Dowsley, F.~Lacerda, and A.~C.~A. Nascimento, ``Commitment and oblivious
  transfer in the bounded storage model with errors,'' \emph{IEEE Transactions
  on Information Theory}, vol.~64, no.~8, pp. 5970--5984, Aug 2018.

\bibitem{C:CanFis01}
R.~Canetti and M.~Fischlin, ``Universally composable commitments,'' in
  \emph{Advances in Cryptology -- {CRYPTO}~2001}, ser. Lecture Notes in
  Computer Science, J.~Kilian, Ed., vol. 2139.\hskip 1em plus 0.5em minus
  0.4em\relax Springer, Heidelberg, Aug. 2001, pp. 19--40.

\bibitem{STOC:CLOS02}
R.~Canetti, Y.~Lindell, R.~Ostrovsky, and A.~Sahai, ``Universally composable
  two-party and multi-party secure computation,'' in \emph{34th Annual {ACM}
  Symposium on Theory of Computing}.\hskip 1em plus 0.5em minus 0.4em\relax
  {ACM} Press, May 2002, pp. 494--503.

\bibitem{TCC:Garay04}
J.~A. Garay, ``Efficient and universally composable committed oblivious
  transfer and applications,'' in \emph{TCC~2004: 1st Theory of Cryptography
  Conference}, ser. Lecture Notes in Computer Science, M.~Naor, Ed., vol.
  2951.\hskip 1em plus 0.5em minus 0.4em\relax Springer, Heidelberg, Feb. 2004,
  pp. 297--316.

\bibitem{C:DamNie03}
I.~Damg{\aa}rd and J.~B. Nielsen, ``Universally composable efficient multiparty
  computation from threshold homomorphic encryption,'' in \emph{Advances in
  Cryptology -- {CRYPTO}~2003}, ser. Lecture Notes in Computer Science,
  D.~Boneh, Ed., vol. 2729.\hskip 1em plus 0.5em minus 0.4em\relax Springer,
  Heidelberg, Aug. 2003, pp. 247--264.

\bibitem{EC:Katz07}
J.~Katz, ``Universally composable multi-party computation using tamper-proof
  hardware,'' in \emph{Advances in Cryptology -- {EUROCRYPT}~2007}, ser.
  Lecture Notes in Computer Science, M.~Naor, Ed., vol. 4515.\hskip 1em plus
  0.5em minus 0.4em\relax Springer, Heidelberg, May 2007, pp. 115--128.

\bibitem{EC:CSSW06}
C.~Cr{\'e}peau, G.~Savvides, C.~Schaffner, and J.~Wullschleger,
  ``Information-theoretic conditions for two-party secure function
  evaluation,'' in \emph{Advances in Cryptology -- {EUROCRYPT}~2006}, ser.
  Lecture Notes in Computer Science, S.~Vaudenay, Ed., vol. 4004.\hskip 1em
  plus 0.5em minus 0.4em\relax Springer, Heidelberg, May~/~Jun. 2006, pp.
  538--554.

\bibitem{ICITS:CreWul08}
C.~Cr{\'e}peau and J.~Wullschleger, ``Statistical security conditions for
  two-party secure function evaluation,'' in \emph{ICITS 08: 3rd International
  Conference on Information Theoretic Security}, ser. Lecture Notes in Computer
  Science, R.~Safavi-Naini, Ed., vol. 5155.\hskip 1em plus 0.5em minus
  0.4em\relax Springer, Heidelberg, Aug. 2008, pp. 86--99.

\bibitem{JIT:DGMN13}
R.~Dowsley, J.~van~de Graaf, J.~M{\"u}ller-Quade, and A.~C.~A. Nascimento, ``On
  the composability of statistically secure bit commitments,'' \emph{Journal of
  Internet Technology}, vol.~14, no.~3, pp. 509--516, 2013.

\bibitem{EC:WolWul06}
S.~Wolf and J.~Wullschleger, ``Oblivious transfer is symmetric,'' in
  \emph{Advances in Cryptology -- {EUROCRYPT}~2006}, ser. Lecture Notes in
  Computer Science, S.~Vaudenay, Ed., vol. 4004.\hskip 1em plus 0.5em minus
  0.4em\relax Springer, Heidelberg, May~/~Jun. 2006, pp. 222--232.

\bibitem{EC:KKMPS16}
D.~Khurana, D.~Kraschewski, H.~K. Maji, M.~Prabhakaran, and A.~Sahai, ``All
  complete functionalities are reversible,'' in \emph{Advances in Cryptology --
  {EUROCRYPT}~2016, Part II}, ser. Lecture Notes in Computer Science,
  M.~Fischlin and J.-S. Coron, Eds., vol. 9666.\hskip 1em plus 0.5em minus
  0.4em\relax Springer, Heidelberg, May 2016, pp. 213--242.

\end{thebibliography}
\end{document}